\pdfoutput=1
\documentclass{amsart}
\usepackage{hyperref}

\usepackage{tikz}
\usepackage{tikz-3dplot}
\usepackage{braids}

\usetikzlibrary{decorations.markings, perspective}

\tikzset{
  baseline=(current bounding box.center),
  x=28pt, y=28pt, font=\small,
  >=stealth,
  vertex/.style={inner sep=0pt},
  lvertex/.style={vertex, draw, fill= white, opacity=1, minimum size=8pt, font=\scriptsize},
  gnode/.style={lvertex, shape=circle},
  fnode/.style={lvertex, shape=rectangle},
  ->-/.style={decoration={
      markings, mark=at position #1 with
      {\arrow{>}}},postaction={decorate}},
  -<-/.style={decoration={
      markings, mark=at position #1 with
      {\arrow{<}}},postaction={decorate}},
}
 
\newcommand{\Z}{\mathbb{Z}}
\newcommand{\R}{\mathbb{R}}
\newcommand{\C}{\mathbb{C}}
\newcommand{\V}{\mathbb{V}}

\newcommand{\CC}{\mathcal{C}}
\newcommand{\CD}{\mathcal{D}}
\newcommand{\CW}{\mathcal{W}}

\DeclareMathOperator{\Tr}{Tr}
\DeclareMathOperator{\End}{End}
\let\Re\relax
\DeclareMathOperator{\Re}{Re}

\newcommand{\Tb}{\overline{T}}
\newcommand{\Cb}{\overline{C}}

\newcommand{\tet}[7]{T_{#1}\biggl[
  \arraycolsep=1pt
  \begin{array}{ccc}
    #2 & #3 & #4 \\
    #5 & #6 & #7
  \end{array}
\biggr]}

\newcommand{\tetb}[7]{\overline{T}_{#1}\biggl[
  \arraycolsep=1pt
  \begin{array}{ccc}
    #2 & #3 & #4 \\
    #5 & #6 & #7
  \end{array}
\biggr]}

\newcommand{\ttet}[7]{{}^tT_{#1}\biggl[
  \arraycolsep=1pt
  \begin{array}{ccc}
    #2 & #3 & #4 \\
    #5 & #6 & #7
  \end{array}
\biggr]}

\newcommand{\Bol}[9]{W_{#1}\biggl[
  \arraycolsep=1pt
  \begin{array}{cccc}
    #2 & #3 & #4 & #5 \\
    #6 & #7 & #8 & #9
  \end{array}
\biggr]}

\newcommand{\sBol}[9]{\CW_{#1}\biggl[
  \arraycolsep=1pt
  \begin{array}{cccc}
    #2 & #3 & #4 & #5 \\
    #6 & #7 & #8 & #9
  \end{array}
\biggr]}

\newtheorem{proposition}{Proposition}

\theoremstyle{remark}
\newtheorem{remark}[proposition]{Remark}

\title{State integral models and the tetrahedron equation}

\author{Junya Yagi}
\address{Yau Mathematical Sciences Center, Tsinghua University, China}

\date{\today}

\begin{document}
\begin{abstract}
  It is shown that for a class of state integral models on shaped
  pseudo $3$-manifolds, including the edge formulation of
  Teichm\"uller TQFT, the Boltzmann weight assigned to a tetrahedron
  solves the tetrahedron equation. The dihedral angles of the
  tetrahedron play the role of spectral parameters.
\end{abstract}

\maketitle

\section{Introduction}

The tetrahedron equation \cite{MR611994b, Zamolodchikov:1981kf} is a
highly overdetermined system of nonlinear equations that underlies
integrability in three dimensions.  It is substantially more complex
than its two-dimensional counterpart, the Yang--Baxter equation, and
only a limited number of solutions are known.

Among the known solutions, of particular interest are those involving
quantum dilogarithm functions.  Such solutions have been constructed
in various frameworks, such as quantized algebras of functions
\cite{MR1278735}, discrete differential geometry
\cite{Bazhanov:2005as, Bazhanov:2008rd}, and cluster algebras
\cite{Sun:2022mpy, Inoue:2023vtx, Inoue:2023rer, Inoue:2024swb}.  The
fact that different approaches give rise to the same solutions
indicates the presence of deep connections among the relevant
mathematical structures.  Furthermore, some of these solutions arise
from supersymmetric gauge theories \cite{Sun:2022mpy} and branes in
M-theory \cite{Yagi:2022tot}.

Quantum dilogarithms also appear in a class of three-dimensional state
integral models, including Teichm\"uller TQFT \cite{MR3227503, AK2}.
These are statistical mechanics models defined on shaped pseudo
$3$-manifolds, which consist of tetrahedra, each assigned the shape of
an ideal hyperbolic tetrahedron.  In these state integral models, the
Boltzmann weight is built from a quantum dilogarithm function and
satisfies the shaped pentagon identity ensuring the invariance of the
partition function under $2$--$3$ moves between triangulations.

Since quantum dilogarithms are crucial ingredients in both cases, it
is natural to ask whether there is a way to realize solutions of the
tetrahedron equation, and hence three-dimensional integrable lattice
models, by state integral models.

In this paper we show that for a state integral model with state
variables residing on the edges of tetrahedra, the tetrahedral weight
(that is, the Boltzmann weight for a single tetrahedron) solves the
tetrahedron equation in the Interaction-Round-a-Cube (IRC) form,
provided that the shaped pentagon identity is satisfied both by the
tetrahedral weight and by its transpose, in which the state variables
on each pair of opposite edges are exchanged.  The dihedral angles of
the tetrahedron play the role of spectral parameters.

The use of state integral models to construct solutions of the
tetrahedron equation was advocated in \cite{Shim:2026uxv}, inspired in
part by the work of Maillet~\cite{Maillet:1993av}.  The tetrahedron
equation in the IRC form can be interpreted geometrically as an
equivalence between two ways of decomposing a rhombic dodecahedron
into four parallelepipeds.  See Fig.~\ref{fig:TE-IRC}.  The idea of
\cite{Shim:2026uxv} is to take this interpretation seriously and
construct solutions as partition functions of state integral models on
certain shaped triangulations of parallelepipeds.  The resulting
solutions, however, satisfy only a weaker form of the tetrahedron
equation.  Moreover, their dependence on spectral parameters can be
eliminated by a gauge transformation on dihedral angles.

\begin{figure}
  \begin{equation*}
  \begin{tikzpicture}[3d view={0}{60}, scale=0.75]
    \node[gnode] (d) at (0,0,0) {$d$};

    \node[gnode] (a1) at (-1,-1,-1) {$a_1$};
    \node[gnode] (a2) at (-1,1,1) {$a_2$};
    \node[gnode] (a3) at (1,-1,1) {$a_3$};
    \node[gnode] (a4) at (1,1,-1) {$a_4$};
    
    \node[gnode] (b1) at (1,1,1) {$b_1$};
    \node[gnode] (b2) at (1,-1,-1) {$b_2$};
    \node[gnode] (b3) at (-1,1,-1) {$b_3$};
    \node[gnode] (b4) at (-1,-1,1) {$b_4$};

    \node[gnode] (c1) at (2,0,0) {$c_1$};
    \node[gnode] (c2) at (0,0,-2) {$c_2$};
    \node[gnode] (c3) at (0,2,0) {$c_3$};
    \node[gnode] (c4) at (0,0,2) {$c_4$};
    \node[gnode] (c5) at (-2,0,0) {$c_5$};
    \node[gnode] (c6) at (0,-2,0) {$c_6$};
    
    \draw[black!25] (a1) -- (c2);
    \draw[black!25] (a4) -- (c1);
    \draw[black!25] (a4) -- (c2);
    \draw[black!25] (a4) -- (c3);
    \draw[black!25] (b2) -- (c2);
    \draw[black!25] (b3) -- (c2);

    \draw[black!50] (b1) -- (d);
    \draw[black!50] (b2) -- (d);
    \draw[black!50] (b3) -- (c3);
    \draw[black!50] (b3) -- (c5);
    \draw[black!50] (b3) -- (d);
    \draw[black!50] (b4) -- (d);

    \draw (a1) -- (c5);
    \draw (a1) -- (c6);
    \draw (a2) -- (c4);
    \draw (a2) -- (c3);
    \draw (a2) -- (c5);
    \draw (a3) -- (c6);
    \draw (a3) -- (c1);
    \draw (a3) -- (c4);
    \draw (b1) -- (c1);
    \draw (b1) -- (c3);
    \draw (b1) -- (c4);
    \draw (b2) -- (c1);
    \draw (b2) -- (c6);
    \draw (b4) -- (c4);
    \draw (b4) -- (c5);
    \draw (b4) -- (c6);
  \end{tikzpicture}
  \ = \ 
  \begin{tikzpicture}[3d view={0}{60}, scale=0.75]
    \node[gnode] (d) at (0,0,0) {$d$};

    \node[gnode] (a1) at (-1,-1,-1) {$a_1$};
    \node[gnode] (a2) at (-1,1,1) {$a_2$};
    \node[gnode] (a3) at (1,-1,1) {$a_3$};
    \node[gnode] (a4) at (1,1,-1) {$a_4$};
    
    \node[gnode] (b1) at (1,1,1) {$b_1$};
    \node[gnode] (b2) at (1,-1,-1) {$b_2$};
    \node[gnode] (b3) at (-1,1,-1) {$b_3$};
    \node[gnode] (b4) at (-1,-1,1) {$b_4$};

    \node[gnode] (c1) at (2,0,0) {$c_1$};
    \node[gnode] (c2) at (0,0,-2) {$c_2$};
    \node[gnode] (c3) at (0,2,0) {$c_3$};
    \node[gnode] (c4) at (0,0,2) {$c_4$};
    \node[gnode] (c5) at (-2,0,0) {$c_5$};
    \node[gnode] (c6) at (0,-2,0) {$c_6$};
    
    \draw[black!25] (a1) -- (c2);
    \draw[black!25] (a4) -- (c2);
    \draw[black!25] (b2) -- (c2);
    \draw[black!25] (b3) -- (c2);
    \draw[black!25] (b3) -- (c3);
    \draw[black!25] (b3) -- (c5);

    \draw[black!50] (a1) -- (d);
    \draw[black!50] (a2) -- (d);
    \draw[black!50] (a3) -- (d);
    \draw[black!50] (a4) -- (c1);
    \draw[black!50] (a4) -- (c3);
    \draw[black!50] (a4) -- (d);

    \draw (a1) -- (c5);
    \draw (a1) -- (c6);
    \draw (a2) -- (c3);
    \draw (a2) -- (c4);
    \draw (a2) -- (c5);
    \draw (a3) -- (c1);
    \draw (a3) -- (c4);
    \draw (a3) -- (c6);
    \draw (b1) -- (c1);
    \draw (b1) -- (c3);
    \draw (b1) -- (c4);
    \draw (b2) -- (c1);
    \draw (b2) -- (c6);
    \draw (b4) -- (c4);
    \draw (b4) -- (c5);
    \draw (b4) -- (c6);
  \end{tikzpicture}
  \end{equation*}
    
  \caption{The tetrahedron equation in the IRC form.}
  \label{fig:TE-IRC}
\end{figure}
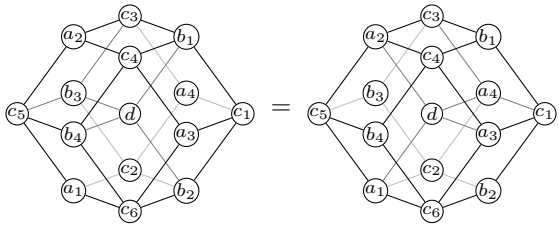

The construction presented in this work is based on a different idea,
one motivated by the cluster algebra approach of \cite{Sun:2022mpy}.
In this construction, the state variables on the edges of tetrahedra
in state integral models are mapped to state variables at the vertices
of cubes in IRC models; the cubic lattice does not correspond to any
shaped pseudo $3$-manifold.  For solutions produced with this
construction, the tetrahedron equation is satisfied in the strict
form, and the spectral parameters are genuine physical parameters.

In a recent work \cite{BKMS}, it is claimed that solutions of the
tetrahedron equation in the IRC form as well as in the vertex form can
be constructed from quantum dilogarithms satisfying an inversion
relation and the pentagon identity.  It is plausible that their
construction is related to the one discussed here.

The rest of the paper is organized as follows. We begin in
section~\ref{sec:IRC} by reviewing IRC models and their integrability,
which is implied by the tetrahedron equation.  In
section~\ref{sec:SIM}, we introduce state integral models on shaped
pseudo $3$-manifolds. We present the main result in
section~\ref{sec:TE-TW}.  Section~\ref{sec:conclusions}
contains concluding remarks.

\subsection*{Acknowledgments}

The author would like to thank Myungbo Shim, Xiaoyue Sun and Wang Hao
for discussions.  This work is supported by NSFC Grant Number
12375064.

\section{Integrable IRC models}
\label{sec:IRC}

\subsection{IRC models}

Fix positive integers $L$, $M$, $N$, and let
\begin{equation}
  T^3_{L,M,N} := (\R/L\Z) \times (\R/M\Z) \times (\R/N\Z)
\end{equation}
be the $3$-torus of size $L \times M \times N$.  Inside $T^3_{L,M,N}$,
consider the $L+M+N$ planes
\begin{alignat}{2}
  X_l &:= \{(l, y, z) \mid y \in \R/M\Z, \ z \in \R/N\Z\} \,,
  &\quad
  l &\in \Z/L\Z \,,
  \\
  Y_m &:= \{(x, m, z) \mid x \in \R/L\Z, \ z \in \R/N\Z\} \,,
  &\quad
  m &\in \Z/M\Z \,,
  \\
  Z_n &:= \{(x, y, n) \mid x \in \R/L\Z, \ y \in \R/M\Z\} \,,
  &\quad
  n &\in \Z/N\Z \,.
\end{alignat}
The lines of intersection of these planes form an
$L \times M \times N$ periodic cubic lattice.  Connecting the centers
of the cubes by lines orthogonal to faces, we obtain the dual lattice,
which is again an $L \times M \times N$ cubic lattice and has vertices
at the points $(l+\frac12, m+\frac12, n+\frac12)$, $l \in \Z/L\Z$,
$m \in \Z/M\Z$, $n \in \Z/N\Z$.

An \emph{IRC model} on the dual cubic lattice is a statistical
mechanics model whose state variables are placed at the vertices and
interact locally around the cubes.  The interaction energy is a
function of the state variables, which take values in a set $S$; for
the purpose of the present paper we can assume that $S$ is a subset of
$\C$.  The interaction energy may depend on parameters assigned to the
planes, called the \emph{spectral parameters}.  Typically, spectral
parameters are complex numbers or points on a complex algebraic curve.
We will use $C$ to denote the set of their possible values.

Let $a_{x,y,z} \in S$ be the state variable at
$(x,y,z) \in T^3_{L,M,N}$, and let $s_l$, $t_m$, $u_n \in C$ be
the spectral parameters assigned to the planes $X_l$, $Y_m$,
$Z_n$, respectively.  At the eight corners of the cube centered at
$(l,m,n)$ where $X_l$, $Y_m$, $Z_n$ intersect, the eight
state variables $a_{l\pm\frac12,m\pm\frac12,n\pm\frac12}$ are present
and interact.  We denote the Boltzmann weight for this local
interaction by
\begin{equation}
  \label{eq:W}
  \Bol{s_l,t_m,u_n}{a_{l-\frac12,m-\frac12,n-\frac12}}{a_{l+\frac12,m-\frac12,n-\frac12}}{a_{l-\frac12,m+\frac12,n-\frac12}}{a_{l-\frac12,m-\frac12,n+\frac12}}{a_{l+\frac12,m+\frac12,n+\frac12}}{a_{l-\frac12,m+\frac12,n+\frac12}}{a_{l+\frac12,m-\frac12,n+\frac12}}{a_{l+\frac12,m+\frac12,n-\frac12}} \,;
\end{equation}
we will also abbreviate it as
\begin{equation}
  W^{(l,m,n)} \,.
\end{equation}
In traditional statistical mechanics, the Boltzmann weight is
interpreted as a probability measure and hence is a nonnegative real
number.  In this work we will consider models with complex Boltzmann
weights.  Thus, the local Boltzmann weight~\eqref{eq:W} defines a
function
\begin{equation}
  W\colon C^3 \times S^8 \to \C \,.
\end{equation}
See Fig.~\ref{fig:W} for an illustration of the local Boltzmann
weight.

\begin{figure}
  \begin{equation*}
    \Bol{s,t,u}{a}{b}{c}{d}{e}{f}{g}{h}
    = \quad
\begin{tikzpicture}[3d view={145}{20}, scale=1.2, font=\scriptsize]
      \begin{scope}[xshift=40pt]
        \draw[->] (0,0,0) -- (0.3,0,0) node[left=-2pt, yshift=-1pt] {$x$};
        \draw[->] (0,0,0) -- (0,0.3,0) node[right=-2pt, yshift=-1pt] {$y$};
        \draw[->] (0,0,0) -- (0,0,0.3) node[above=-2pt] {$z$};
      \end{scope}
      
      \node[gnode] (a) at (0,0,0) {$a$};
      \node[gnode] (b) at (1,0,0) {$b$};
      \node[gnode] (c) at (0,1,0) {$c$};
      \node[gnode] (d) at (0,0,1) {$d$};

      \draw[black!50] (a) -- (b);
      \draw[black!50] (a) -- (c);
      \draw[black!50] (a) -- (d);

      \draw[green!20!black, fill=green, opacity=0.2]
      (-0.1,0.5,-0.1) -- (-0.1,0.5,1.1)
      -- (1.1,0.5,1.1) -- (1.1,0.5,-0.1) node[below left=-2pt, opacity=1] {$t$} -- cycle;
      
      \draw[red!20!black, fill=red, opacity=0.2]
      (0.5,-0.1,-0.1) -- (0.5,1.1,-0.1)
      -- (0.5,1.1,1.1) -- (0.5,-0.1,1.1) node[above left=-2pt, opacity=1] {$s$} -- cycle;
      
      \draw[blue!20!black, fill=blue, opacity=0.2]
      (-0.1,-0.1,0.5) -- (1.1,-0.1,0.5)
      -- (1.1,1.1,0.5) -- (-0.1,1.1,0.5) node[right=-2pt, opacity=1] {$u$} -- cycle;

      \node[gnode] (f) at (0,1,1) {$f$};
      \node[gnode] (g) at (1,0,1) {$g$};
      \node[gnode] (h) at (1,1,0) {$h$};
      \node[gnode] (e) at (1,1,1) {$e$};

      \draw (f) -- (e);
      \draw (g) -- (e);
      \draw (h) -- (e);
      \draw (b) -- (g);
      \draw (b) -- (h);
      \draw (c) -- (h);
      \draw (c) -- (f);
      \draw (d) -- (f);
      \draw (d) -- (g);

      \draw[red!50!green, opacity=0.2] (0.5,0.5,-0.1) -- (0.5,0.5,1.1);
      \draw[red!50!blue, opacity=0.2] (0.5,-0.1,0.5) -- (0.5,1.1,0.5);
      \draw[green!50!blue, opacity=0.2] (-0.1,0.5,0.5) -- (1.1,0.5,0.5);
    \end{tikzpicture}
    \end{equation*}
    
    \caption{The local Boltzmann weight for an IRC model.}
    \label{fig:W}
\end{figure}

The Boltzmann weight for a configuration of state variables on the
entire lattice is given by the product of all local Boltzmann weights
from all cubes in the lattice.  The \emph{partition function} $Z$ of
the model is the integral of the Boltzmann weight over all
configurations of state variables:
\begin{equation}
  Z
  :=
  \int \prod_{l,m,n}  W^{(l,m,n)} d\mu(a_{l-\frac12,m-\frac12,n-\frac12})
  \,.
\end{equation}
Here the integration over each state variable is performed with
respect to some measure $\mu$ on $S \subset \C$.

\subsection{Integrability and the tetrahedron equation}

In order to state the integrability condition for the model, we
reformulate the model using a quantum mechanical language.

Consider the layer of $L \times M$ cubes intersected by $Z_n$.  The
Boltzmann weight for this layer is determined by $2LM$ state
variables, $LM$ each on the top and the bottom of the layer.  Let us
write this Boltzmann weight as
\begin{equation}
  \tau(u_n; s_1, \dots, s_L, t_1, \dots, t_M)_{a_{\frac12,\frac12,n-\frac12} \dotso a_{L-\frac12,M-\frac12,n-\frac12}}^{a_{\frac12,\frac12,n+\frac12} \dotso a_{L-\frac12,M-\frac12,n+\frac12}}
  :=
  \prod_{l,m} W^{(l,m,n)}
  \,.
\end{equation}
Viewing this quantity as a matrix element of a linear map acting on
the vector space $\V(S^{LM})$ of an appropriate class of functions on
$S^{LM}$ (where summation over indices is replaced by integration with
respect to $\mu$), we obtain a map
\begin{equation}
  \tau\colon C \times C^L \times C^M
  \to \End\bigl(\V(S^{LM})\bigr) \,,
\end{equation}
which we call the \emph{layer transfer matrix}.

Using the layer transfer matrix, we can calculate the partition
function as
\begin{equation}
  \label{eq:Z}
  Z = \Tr_{\V(S^{LM})}\bigl(\tau(u_N) \tau(u_{N-1}) \dotsm \tau(u_1)\bigr) \,.
\end{equation}
(Here and below we suppress from the notation the dependence on the
spectral parameters $s_1, \dots, s_L$, $t_1, \dots, t_M$.)  The factor
$\tau(u_n)$ ``transfers'' a state on the bottom of the $n$th layer to
one on the top.  In this sense, the layer transfer matrix can be
thought of as a discrete analog of the time evolution operator in
quantum mechanics.  After the initial state on the bottom of the first
layer is transferred to the top of the $N$th layer, the final state is
projected onto the initial state by the trace since the time direction
is periodic.

Suppose that $\tau(u)$ is analytic in $u$, and $\tau(u)$ and
$\tau(u')$ commute for any values $u$, $u'$ of the spectral
parameters:
\begin{equation}
  \label{eq:TT}
  [\tau(u), \tau(u')] = 0 \,.
\end{equation}
The commutativity implies that the coefficients of the expansion of
$\tau(u)$ in powers of $u$ are mutually commuting operators on
$\V(S^{LM})$.  In this case, we have a family of commuting
``conserved charges'' and the model is said to be \emph{integrable}.

A fundamental result in the area of three-dimensional lattice models
is that the commutativity \eqref{eq:TT} holds if the Boltzmann weight
satisfies the tetrahedron equation and a certain operator constructed
from the Boltzmann weight is invertible.  This is the
three-dimensional analogue of the train track argument in
two-dimensional integrable lattice models, where the Yang--Baxter
equation allows two row transfer matrices to be interchanged.  See,
e.g., \cite{Inoue:2025xzj} for a graphical proof of this result.

The tetrahedron equation is an equivalence between two configurations
of four intersecting planes $H_1$, $H_2$, $H_3$, $H_4$,
where each configuration cuts out a tetrahedron in three-dimensional
space.  To each intersection of three planes a factor of the local
Boltzmann weight is assigned.  For IRC models, the tetrahedron equation reads \cite{MR683099,
  MR696804}
\begin{equation}
  \label{eq:TE}
  \begin{aligned}
    \int_S d\mu(d)
    &
    \Bol{r_1, r_2, r_3}{a_4}{c_2}{c_3}{c_1}{d}{b_1}{b_2}{b_3}
    \Bol{r_1, r_2, r_4}{c_1}{b_2}{b_1}{a_3}{b_4}{c_4}{c_6}{d}
    \\
    \times
    &
    \Bol{r_1, r_3, r_4}{b_1}{d}{c_3}{c_4}{c_5}{a_2}{b_4}{b_3}
    \Bol{r_2, r_3, r_4}{d}{b_2}{b_3}{b_4}{a_1}{c_5}{c_6}{c_2}
    \\
    =
    \int_S d\mu(d)
    &
    \Bol{r_2, r_3, r_4}{b_1}{c_1}{c_3}{c_4}{d}{a_2}{a_3}{a_4}
    \Bol{r_1, r_3, r_4}{c_1}{b_2}{a_4}{a_3}{a_1}{d}{c_6}{c_2}
    \\
    \times
    &
    \Bol{r_1, r_2, r_4}{a_4}{c_2}{c_3}{d}{c_5}{a_2}{a_1}{b_3}
    \Bol{r_1, r_2, r_3}{d}{a_1}{a_2}{a_3}{b_4}{c_4}{c_6}{c_5}
    \,,
  \end{aligned}
\end{equation}
where $r_i$ is the spectral parameter of $H_i$.  This is an equality
to be satisfied for all
$(a_1, \dots, a_4, b_1, \dots, b_4, c_1,\dots, c_6) \in S^{14}$; these
state variables reside at the vertices of the rhombic dodecahedra in
Fig.~\ref{fig:TE-IRC}.

Alternatively, the tetrahedron equation may be understood as an
equivalence between the two sequences of braid moves shown in
Fig.~\ref{fig:TE-wires}, which convert one wiring diagram with four
wires $H_1$, $H_2$, $H_3$, $H_4$ that represents a reduced word for
the longest element of the symmetric group on four letters into
another wiring diagram representing another reduced word for the same
element.  Each braid move corresponds to a factor of the local
Boltzmann weight.  The four factors on the left-hand side
of~\eqref{eq:TE} correspond, from left to right, to the four arrows in
the top row of Fig.~\ref{fig:TE-wires}; the four factors on the
right-hand side correspond similarly to the arrows in the bottom row.

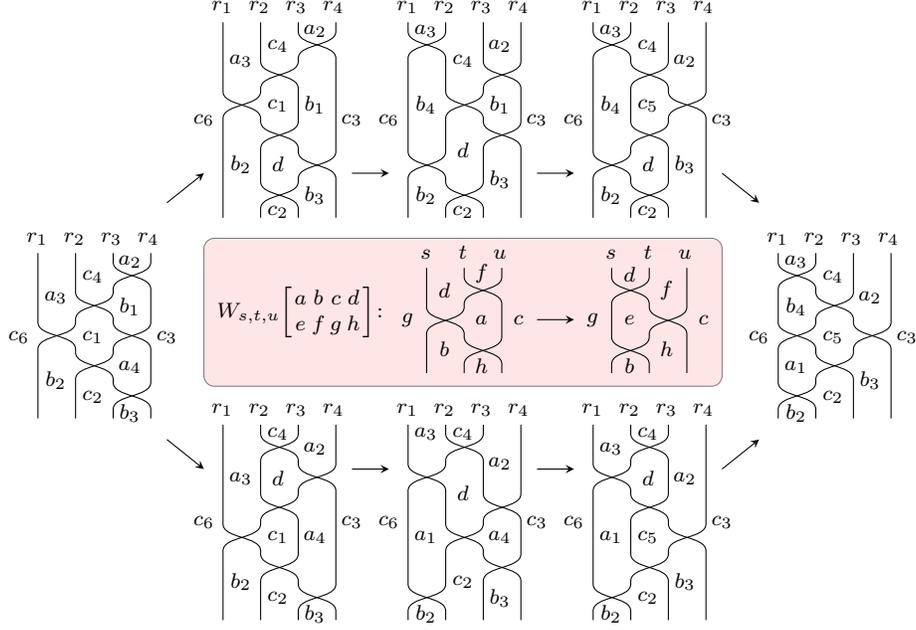
\begin{figure}
  \centering
  \begin{tikzpicture}[xscale=0.5, yscale=0.4, font=\footnotesize]
    \begin{scope}[shift={(8,-1.5)}]
      \draw[rounded corners, draw=black!50, fill=red!10] (-2.5,1) rectangle (11.5,-4);

      \node at (0.2,-1.5) {$\Bol{s,t,u}{a}{b}{c}{d}{e}{f}{g}{h}\colon$};

      \begin{scope}[shift={(2.5,0)}]
        \braid[gap=0] a_2 a_1 a_2;
        
        \node at (1,0) [above] {$s$};
        \node at (2,0) [above] {$t$};
        \node at (3,0) [above] {$u$};
        
        \node at (0.5,-1.75) {$g$};
        \node at (2.5,-1.75) {$a$};
        \node at (1.5,-0.75) {$d$};
        \node at (1.5,-2.75) {$b$};
        \node at (2.5,-3.3) {$h$};
        \node at (2.5,-0.2) {$f$};
        \node at (3.5,-1.75) {$c$};
      \end{scope}
      \draw[->, shift={(6.5,-1.75)}] (0,0) -- (1,0);
      \begin{scope}[shift={(7.5,0)}]
        \braid[gap=0] a_1 a_2 a_1;
        
        \node at (1,0) [above] {$s$};
        \node at (2,0) [above] {$t$};
        \node at (3,0) [above] {$u$};
        
        \node at (0.5,-1.75) {$g$};
        \node at (1.5,-1.75) {$e$};
        \node at (1.5,-0.2) {$d$};
        \node at (1.5,-3.3) {$b$};
        \node at (2.5,-2.75) {$h$};
        \node at (2.5,-0.75) {$f$};
        \node at (3.5,-1.75) {$c$};
      \end{scope}
    \end{scope}

    \begin{scope}[shift={(0,-1)}]
      \braid[gap=0] a_3 a_2 a_3-a_1 a_2 a_3;
      
      \node at (1,0.5) {$r_1$};
      \node at (2,0.5) {$r_2$};
      \node at (3,0.5) {$r_3$};
      \node at (4,0.5) {$r_4$};
      
      \node at (0.5,-2.85) {$c_6$};
      
      \node at (1.5,-1.5) {$a_3$};
      \node at (1.5,-4.3) {$b_2$};
      
      \node at (2.5,-0.8) {$c_4$};
      \node at (2.5,-2.85) {$c_1$};
      \node at (2.5,-4.85) {$c_2$};
      
      \node at (3.5,-0.3) {$a_2$};
      \node at (3.5,-1.8) {$b_1$};
      \node at (3.5,-3.85) {$a_4$};
      \node at (3.5,-5.4) {$b_3$};
      
      \node at (4.5,-2.85) {$c_3$};
    \end{scope}
    
    \begin{scope}[shift={(5,6.8)}] 
      \braid[gap=0] a_3 a_2 a_1 a_2 a_3 a_2;
      
      \node at (1,0.5) {$r_1$};
      \node at (2,0.5) {$r_2$};
      \node at (3,0.5) {$r_3$};
      \node at (4,0.5) {$r_4$};
      
      \node at (0.5,-3.3) {$c_6$};
      
      \node at (1.5,-1.3) {$a_3$};
      \node at (1.5,-4.8) {$b_2$};
      
      \node at (2.5,-0.8) {$c_4$};
      \node at (2.5,-2.8) {$c_1$};
      \node at (2.5,-4.8) {$d$};
      \node at (2.5,-6.4) {$c_2$};
      
      \node at (3.5,-0.3) {$a_2$};
      \node at (3.5,-2.8) {$b_1$};
      \node at (3.5,-5.85) {$b_3$};
      
      \node at (4.5,-3.3) {$c_3$};
    \end{scope}
    
    \begin{scope}[shift={(10,6.8)}] 
      \braid[gap=0] a_1 a_3 a_2 a_3 a_1 a_2;
      
      \node at (1,0.5) {$r_1$};
      \node at (2,0.5) {$r_2$};
      \node at (3,0.5) {$r_3$};
      \node at (4,0.5) {$r_4$};
        
      \node at (0.5,-3.3) {$c_6$};
      
      \node at (1.5,-0.3) {$a_3$};
      \node at (1.5,-2.8) {$b_4$};
      \node at (1.5,-5.85) {$b_2$};
      
      \node at (2.5,-1.3) {$c_4$};
      \node at (2.5,-6.4) {$c_2$};
      \node at (2.5,-4.3) {$d$};
      
      \node at (3.5,-0.8) {$a_2$};
      \node at (3.5,-2.8) {$b_1$};
      \node at (3.5,-5.3) {$b_3$};
      
      \node at (4.5,-3.3) {$c_3$};
    \end{scope}
    
    \begin{scope}[shift={(15,6.8)}] 
      \braid[gap=0] a_1 a_2 a_3 a_2 a_1 a_2;
      
      \node at (1,0.5) {$r_1$};
      \node at (2,0.5) {$r_2$};
      \node at (3,0.5) {$r_3$};
      \node at (4,0.5) {$r_4$};
      
      \node at (0.5,-3.3) {$c_6$};
      
      \node at (1.5,-0.3) {$a_3$};
      \node at (1.5,-2.8) {$b_4$};
      \node at (1.5,-5.85) {$b_2$};
      
      \node at (2.5,-0.8) {$c_4$};
      \node at (2.5,-2.8) {$c_5$};
      \node at (2.5,-6.4) {$c_2$};
      \node at (2.5,-4.8) {$d$};
      
      \node at (3.5,-1.3) {$a_2$};
      \node at (3.5,-4.8) {$b_3$};
      
      \node at (4.5,-3.3) {$c_3$};
    \end{scope}
    
    \begin{scope}[shift={(5,-6.8)}] 
      \braid[gap=0] a_2 a_3 a_2 a_1 a_2 a_3;
      
      \node at (1,0.5) {$r_1$};
      \node at (2,0.5) {$r_2$};
      \node at (3,0.5) {$r_3$};
      \node at (4,0.5) {$r_4$};
      
      \node at (0.5,-3.3) {$c_6$};
      
      \node at (1.5,-1.8) {$a_3$};
      \node at (1.5,-5.3) {$b_2$};
      
      \node at (2.5,-0.3) {$c_4$};
      \node at (2.5,-1.8) {$d$};
      \node at (2.5,-3.8) {$c_1$};
      \node at (2.5,-5.85) {$c_2$};
      
      \node at (3.5,-0.8) {$a_2$};
      \node at (3.5,-3.8) {$a_4$};
      \node at (3.5,-6.4) {$b_3$};
      
      \node at (4.5,-3.3) {$c_3$};
    \end{scope}
    
    \begin{scope}[shift={(10,-6.8)}] 
      \braid[gap=0] a_2 a_1 a_3 a_2 a_3 a_1;
      
      \node at (1,0.5) {$r_1$};
      \node at (2,0.5) {$r_2$};
      \node at (3,0.5) {$r_3$};
      \node at (4,0.5) {$r_4$};
      
      \node at (0.5,-3.3) {$c_6$};
      
      \node at (1.5,-0.3) {$a_3$};
      \node at (1.5,-3.8) {$a_1$};
      \node at (1.5,-6.4) {$b_2$};
      
      \node at (2.5,-0.3) {$c_4$};
      \node at (2.5,-2.3) {$d$};
      \node at (2.5,-5.3) {$c_2$};
      
      \node at (3.5,-1.3) {$a_2$};
      \node at (3.5,-3.8) {$a_4$};
      \node at (3.5,-5.85) {$b_3$};
      
      \node at (4.5,-3.3) {$c_3$};
    \end{scope}
    
    \begin{scope}[shift={(15,-6.8)}] 
      \braid[gap=0] a_2 a_1 a_2 a_3 a_2 a_1;
      
      \node at (1,0.5) {$r_1$};
      \node at (2,0.5) {$r_2$};
      \node at (3,0.5) {$r_3$};
      \node at (4,0.5) {$r_4$};
      
      \node at (0.5,-3.3) {$c_6$};
      
      \node at (1.5,-0.8) {$a_3$};
      \node at (1.5,-3.8) {$a_1$};
      \node at (1.5,-6.4) {$b_2$};
      
      \node at (2.5,-0.3) {$c_4$};
      \node at (2.5,-1.8) {$d$};
      \node at (2.5,-3.8) {$c_5$};
      \node at (2.5,-5.85) {$c_2$};
      
      \node at (3.5,-1.8) {$a_2$};
      \node at (3.5,-5.3) {$b_3$};
      
      \node at (4.5,-3.3) {$c_3$};
    \end{scope}
    
    \begin{scope}[shift={(20,-1)}] 
      \braid[gap=0] a_1 a_2 a_3-a_1 a_2 a_1;
      
      \node at (1,0.5) {$r_1$};
      \node at (2,0.5) {$r_2$};
      \node at (3,0.5) {$r_3$};
      \node at (4,0.5) {$r_4$};
      
      \node at (0.5,-2.85) {$c_6$};
      
      \node at (1.5,-0.3) {$a_3$};
      \node at (1.5,-3.85) {$a_1$};
      \node at (1.5,-1.8) {$b_4$};
      \node at (1.5,-5.4) {$b_2$};
      
      \node at (2.5,-0.8) {$c_4$};
      \node at (2.5,-2.85) {$c_5$};
      \node at (2.5,-4.85) {$c_2$};
      
      \node at (3.5,-1.5) {$a_2$};
      \node at (3.5,-4.3) {$b_3$};
      
      \node at (4.5,-2.85) {$c_3$};
    \end{scope}
    
    \draw[->, shift={(4.5,4)}] (0,-3.3) -- (1,-2.3);
    \draw[->, shift={(4.5,-4)}] (0,-3.3) -- (1,-4.3);
    
    \draw[->, shift={(9.5,5)}] (0,-3.3) -- (1,-3.3);
    \draw[->, shift={(9.5,-5)}] (0,-3.3) -- (1,-3.3);
    
    \draw[->, shift={(14.5,5)}] (0,-3.3) -- (1,-3.3);
    \draw[->, shift={(14.5,-5)}] (0,-3.3) -- (1,-3.3);
    
    \draw[->, shift={(19.5,4)}] (0,-2.3) -- (1,-3.3);
    \draw[->, shift={(19.5,-4)}] (0,-4.3) -- (1,-3.3);
  \end{tikzpicture}
  \caption{The wiring diagram presentation of the local Boltzmann
    weight \eqref{eq:W} and the tetrahedron equation \eqref{eq:TE}.}
  \label{fig:TE-wires}
\end{figure}

\subsection{Six-parameter tetrahedron equation}

A stronger form of the tetrahedron equation depends on six parameters
$\rho_{12}$, $\rho_{13}$, $\rho_{14}$, $\rho_{23}$, $\rho_{24}$,
$\rho_{34}$:
\begin{equation}
  \label{eq:TE6}
  \begin{aligned}
    \int_S d\mu(d)
    &
    \sBol{\rho_{12}, \rho_{13}, \rho_{23}}{a_4}{c_2}{c_3}{c_1}{d}{b_1}{b_2}{b_3}
    \sBol{\rho_{12}, \rho_{14}, \rho_{24}}{c_1}{b_2}{b_1}{a_3}{b_4}{c_4}{c_6}{d}
    \\
    \times
    &
    \sBol{\rho_{13}, \rho_{14}, \rho_{34}}{b_1}{d}{c_3}{c_4}{c_5}{a_2}{b_4}{b_3}
    \sBol{\rho_{23}, \rho_{24}, \rho_{34}}{d}{b_2}{b_3}{b_4}{a_1}{c_5}{c_6}{c_2}
    \\
    =
    \int_S d\mu(d)
    &
    \sBol{\rho_{23}, \rho_{24}, \rho_{34}}{b_1}{c_1}{c_3}{c_4}{d}{a_2}{a_3}{a_4}
    \sBol{\rho_{13}, \rho_{14}, \rho_{34}}{c_1}{b_2}{a_4}{a_3}{a_1}{d}{c_6}{c_2}
    \\
    \times
    &
    \sBol{\rho_{12}, \rho_{14}, \rho_{24}}{a_4}{c_2}{c_3}{d}{c_5}{a_2}{a_1}{b_3}
    \sBol{\rho_{12}, \rho_{13}, \rho_{23}}{d}{a_1}{a_2}{a_3}{b_4}{c_4}{c_6}{c_5}
    \,.
  \end{aligned}
\end{equation}
The parameter $\rho_{ij}$ may be regarded as assigned to the
intersection $H_i \cap H_j$.  This is the form in which the
tetrahedron equation originally appeared in \cite{MR611994b,
  Zamolodchikov:1981kf, MR696804},%
\footnote{The tetrahedron equation \eqref{eq:TE6} coincides with the
  relation given in Eqs.~(2.2), (3.14), (3.15) of \cite{MR696804}
  under the identification $\rho_{12} = \theta_1$,
  $\rho_{13} = \pi - \theta_3$, $\rho_{14} = \theta_2$,
  $\rho_{23} = \theta_4$, $\rho_{24} = \pi - \theta_6$,
  $\rho_{34} = \theta_5$ and
  $\sBol{s,t,u}{a}{b}{c}{d}{e}{f}{g}{h} = W(u, s, \pi-t;
  a|bdc|fhg|e)$.}
though in Zamolodchikov's solution the six parameters satisfy one
relation.

For simplicity, assume that all six parameters are independent. Let
$\CC$ denote the set of their values.  Then, given a solution
\begin{equation}
  \CW\colon \CC^3 \times S^8 \to \C
\end{equation}
of the six-parameter tetrahedron equation \eqref{eq:TE6}, setting
\begin{equation}
  W_{r_i,r_j,r_k} = \CW_{\rho(r_i,r_j), \rho(r_i,r_k), \rho(r_j,r_k)}
\end{equation}
for any function $\rho\colon C^2 \to \CC$ yields a solution of
the tetrahedron equation \eqref{eq:TE} with four spectral parameters.

\section{State integral models on shaped pseudo $3$-manifolds}
\label{sec:SIM}

\subsection{Shaped pseudo $3$-manifolds}

Imagine that we have a finite collection of tetrahedra, each with
totally ordered vertices, and we glue them along faces in such a way
that each edge can be consistently oriented from the smaller to larger
vertices in all tetrahedra containing it.  The resulting space is
known as an oriented triangulated pseudo $3$-manifold.  The collection
of tetrahedra, now embedded in this space, is called its
triangulation.

The triangulation of an oriented triangulated pseudo $3$-manifold is
said to be \emph{shaped} if it is endowed with a \emph{shape
  structure}, an assignment of the dihedral angles of an ideal
hyperbolic tetrahedron to each tetrahedron in the triangulation: if
the vertices of the tetrahedron are $0$, $1$, $2$, $3$ and
$\alpha_v \in \R_{>0}$ ($v = 1$, $2$ or $3$) is the dihedral angle
around edge $0v$, then $\alpha_v$ is also the dihedral angle around
the edge opposite to $0v$ and
\begin{equation}
  \alpha_1 + \alpha_2 + \alpha_3 = \pi \,.
\end{equation}
A \emph{shaped pseudo $3$-manifold} is an oriented triangulated pseudo
$3$-manifold with a shape structure.

\subsection{Tetrahedral weights}

Consider a statistical mechanics model whose state variables take
values in a subset $S$ of $\C$, endowed with a measure $\mu$, and are
located on the edges of a shaped pseudo $3$-manifold.  Take a single
tetrahedron in the triangulation and label its vertices $0$, $1$, $2$,
$3$ according to their ordering.  Let
$\alpha = (\alpha_1, \alpha_2, \alpha_3)$ be the triplet of dihedral
angles of this tetrahedron (defined as in the previous paragraph), and
let $x_v$ and $x'_v$ denote the state variables on edge $0v$ and the
opposite edge, respectively.

To this tetrahedron the model assigns a Boltzmann weight, which may
depend on the orientation of the tetrahedron.  The orientation is
determined by the relative positions of the vertices: viewed from
vertex $0$, the vertices $1$, $2$, $3$ are arranged either clockwise
or counterclockwise, as illustrated in Fig.~\ref{fig:tetrahedron}.  In
the former case the tetrahedron is positive, and in the latter
negative.

\begin{figure}
  \small
  \begin{equation*} 
   \tet{\alpha}{x_1}{x_2}{x_3}{x'_1}{x'_2}{x'_3}
    =
  \begin{tikzpicture}[3d view={-40}{20}, scale=1.2, rotate=0, font=\scriptsize]
    \coordinate (0) at (0,0,1.2);
    \coordinate (3) at (1,0,0);
    \coordinate (1) at (-1/2,-{sqrt(3)/2},0); 
    \coordinate (2) at (-1/2,{sqrt(3)/2},0);
    
    \draw [->-=0.8] (2) -- node[pos=-0.05] {2} node[xshift=-6pt, sloped, above=-2pt] {$\alpha_1$, $x'_1$} (3) node[pos=1.05] {3};

    \fill[black!10, opacity=0.6] (0) -- (3) -- (1) -- (2) -- cycle;

    \draw [->-=0.8] (0) node[above] {0}
    -- node[xshift=-4pt, sloped, above=-2pt] {$\alpha_1$, $x_1$} (1) node[pos=1.05] {1};
    \draw [->-=0.8] (0)
    -- node[xshift=0pt, sloped, above=-2pt] {$\alpha_2$, $x_2$} (2);
    \draw [->-=0.8] (0)
    -- node[xshift=-2pt, sloped, above=-2pt] {$\alpha_3$, $x_3$} (3);
    \draw [->-=0.8] (1) --  node[xshift=0pt, sloped, below=-2pt] {$\alpha_3$, $x'_3$} (2);
    \draw [->-=0.8] (1)
    -- node[xshift=0pt, sloped, below=-2pt] {$\alpha_2$, $x'_2$} (3);
  \end{tikzpicture}
  \qquad
  \tetb{\alpha}{x_1}{x_2}{x_3}{x'_1}{x'_2}{x'_3}
  =
  \begin{tikzpicture}[3d view={-40}{20}, scale=1.2, rotate=0, font=\scriptsize]
    \coordinate (0) at (0,0,1.2);
    \coordinate (2) at (1,0,0);
    \coordinate (1) at (-1/2,-{sqrt(3)/2},0); 
    \coordinate (3) at (-1/2,{sqrt(3)/2},0);
    
    \draw [->-=0.8] (2) -- node[pos=-0.05] {2} node[xshift=-6pt, sloped, above=-2pt] {$\alpha_1$, $x'_1$} (3) node[pos=1.05] {3};

    \fill[black!10, opacity=0.6] (0) -- (3) -- (1) -- (2) -- cycle;

    \draw [->-=0.8] (0) node[above] {0}
    -- node[xshift=-4pt, sloped, above=-2pt] {$\alpha_1$, $x_1$} (1) node[pos=1.05] {1};
    \draw [->-=0.8] (0)
    -- node[xshift=-2pt, sloped, above=-2pt] {$\alpha_2$, $x_2$} (2);
    \draw [->-=0.8] (0)
    -- node[xshift=0pt, sloped, above=-2pt] {$\alpha_3$, $x_3$} (3);
    \draw [->-=0.8] (1) --  node[xshift=0pt, sloped, below=-2pt] {$\alpha_3$, $x'_3$} (2);
    \draw [->-=0.8] (1)
    -- node[xshift=-4pt, sloped, below=-2pt] {$\alpha_2$, $x'_2$} (3);
  \end{tikzpicture}
  \end{equation*}

  \caption{The tetrahedral weights for a positive tetrahedron (left)
    and a negative tetrahedron (right).}
  \label{fig:tetrahedron}
\end{figure}
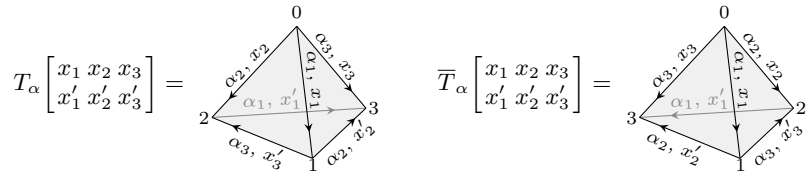

If the tetrahedron is positive, we denote its Boltzmann weight by
\begin{equation}
  \label{eq:T}
  \tet{\alpha}{x_1}{x_2}{x_3}{x'_1}{x'_2}{x'_3} \,.
\end{equation}
Varying the dihedral angles and the state variables, we obtain a
function
\begin{equation}
  T\colon A \times S^6 \to \C \,,
\end{equation}
where
\begin{equation}
  A := \{(\alpha_1,\alpha_2,\alpha_3) \in \R_{>0}^3
  \mid \alpha_1 + \alpha_2 + \alpha_3 = \pi\}
\end{equation}
is the domain of dihedral angles.  Similarly, the Boltzmann weight
assigned to a negative tetrahedron defines a function
\begin{equation}
  \Tb\colon A \times S^6 \to \C \,,
  \quad
  (\alpha; x_1, x_2, x_3, x'_1, x'_2, x'_3)
  \mapsto
  \tetb{\alpha}{x_1}{x_2}{x_3}{x'_1}{x'_2}{x'_3} \,.
\end{equation}
We call these functions the \emph{tetrahedral weights} of the model.
The two tetrahedral weights $T$ and $\Tb$ are typically related by
complex conjugation.

\subsection{Shaped pentagon identity}

Now suppose that a $3$-manifold is homeomorphic to a shaped pseudo
$3$-manifold such that each internal edge is balanced, i.e., the sum
of dihedral angles around it is $2\pi$.  Then, the same $3$-manifold
is homeomorphic to different shaped pseudo $3$-manifolds with balanced
internal edges, related by shaped versions of Pachner moves between
triangulations.  If the partition function of the model is the same on
these different shaped triangulations, we have a \emph{state integral
model} which is expected to capture a topological invariant of the
underlying $3$-manifold.

For this to happen, the partition function must in particular be
invariant under shaped $2$--$3$ moves.  A shaped $2$--$3$ move
combines two tetrahedra into a bipyramid and then decomposes it into
three tetrahedra (and vice versa), as depicted in Fig.~\ref{fig:2-3}.

The invariance under shaped $2$--$3$ moves requires the tetrahedral
weights to satisfy the \emph{shaped pentagon identity}
\begin{align}
  \label{eq:P}
  \tet{\alpha^{(1)}}{x_{02}}{x_{03}}{x_{04}}{x_{34}}{x_{24}}{x_{23}}
  &
  \tet{\alpha^{(3)}}{x_{01}}{x_{02}}{x_{04}}{x_{24}}{x_{14}}{x_{12}}
  \\ \nonumber
  =
  C_{\alpha^{(0)}, \alpha^{(2)}, \alpha^{(4)}}
  \int_S d\mu(x_{13})
  &
    \tet{\alpha^{(0)}}{x_{12}}{x_{13}}{x_{14}}{x_{34}}{x_{24}}{x_{23}}
  \tet{\alpha^{(2)}}{x_{01}}{x_{03}}{x_{04}}{x_{34}}{x_{14}}{x_{13}}
  \tet{\alpha^{(4)}}{x_{01}}{x_{02}}{x_{03}}{x_{23}}{x_{13}}{x_{12}}
    \,,
  \\
  \tetb{\alpha^{(1)}}{x_{02}}{x_{03}}{x_{04}}{x_{34}}{x_{24}}{x_{23}}
  &
  \tetb{\alpha^{(3)}}{x_{01}}{x_{02}}{x_{04}}{x_{24}}{x_{14}}{x_{12}}
  \\ \nonumber
  =
  \Cb_{\alpha^{(0)}, \alpha^{(2)}, \alpha^{(4)}}
  \int_S d\mu(x_{13})
  &
  \tetb{\alpha^{(0)}}{x_{12}}{x_{13}}{x_{14}}{x_{34}}{x_{24}}{x_{23}}
  \tetb{\alpha^{(2)}}{x_{01}}{x_{03}}{x_{04}}{x_{34}}{x_{14}}{x_{13}}
  \tetb{\alpha^{(4)}}{x_{01}}{x_{02}}{x_{03}}{x_{23}}{x_{13}}{x_{12}}
\end{align}
whenever the dihedral angles satisfy the compatibility condition
\begin{equation}
  \label{eq:P-angles}
  \begin{alignedat}{2}
    \alpha^{(1)}_1 &= \alpha^{(0)}_1 + \alpha^{(2)}_1 \,,
    &\qquad
    \alpha^{(1)}_3 &= \alpha^{(0)}_3 + \alpha^{(4)}_1 \,,
    \\
    \alpha^{(3)}_1 &= \alpha^{(2)}_1 + \alpha^{(4)}_1 \,,
    &\qquad
    \alpha^{(3)}_3 &= \alpha^{(0)}_1 + \alpha^{(4)}_3 \,,
    \\
    & & \alpha^{(2)}_3 &= \alpha^{(1)}_3 + \alpha^{(3)}_3 \,.
  \end{alignedat}
\end{equation}
Here we have allowed coefficients
$C_{\alpha^{(0)}, \alpha^{(2)}, \alpha^{(4)}}$,
$\Cb_{\alpha^{(0)}, \alpha^{(2)}, \alpha^{(4)}}$ depending on dihedral
angles to appear on the right-hand sides.  If the partition function
is strictly invariant under shaped $2$--$3$ moves, then
$C_{\alpha^{(0)}, \alpha^{(2)}, \alpha^{(4)}} = \Cb_{\alpha^{(0)},
  \alpha^{(2)}, \alpha^{(4)}} = 1$.

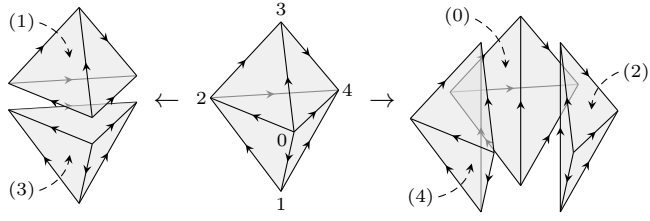
\begin{figure}
\begin{equation*}
  \begin{tikzpicture}[3d view={-40}{20}, scale=1, rotate=0, font=\scriptsize]

  \begin{scope}[yshift=-5pt]
    \coordinate (0) at (-1/2,-{sqrt(3)/2},0); 
    \coordinate (1) at (0,0,-1.2);
    \coordinate (2) at (-1/2,{sqrt(3)/2},0);
    \coordinate (3) at (0,0,1.2);
    \coordinate (4) at (1,0,0);
    
    \draw [->-=0.5] (2) -- (4);
    \fill[black!10, opacity=0.6] (1) -- (2) -- (4) -- cycle;
    \draw [->-=0.5] (0) -- (1);
    \draw [->-=0.5] (0) -- (2);
    \draw [->-=0.5] (0) -- (4);
    \draw [->-=0.5] (1) -- (2);
    \draw [->-=0.5] (1) -- (4);

    \draw[densely dashed, ->] (-2,-1.5,0) node[left=-2pt] {$(3)$} to[bend right] (-1,-1,0);
  \end{scope}
  \begin{scope}[yshift=5pt]
    \coordinate (0) at (-1/2,-{sqrt(3)/2},0); 
    \coordinate (1) at (0,0,-1.2);
    \coordinate (2) at (-1/2,{sqrt(3)/2},0);
    \coordinate (3) at (0,0,1.2);
    \coordinate (4) at (1,0,0);
    
    \draw [->-=0.5] (2) -- (4);
    \fill[black!10, opacity=0.6] (0) -- (2) -- (3) -- (4) -- cycle;
    \draw [->-=0.5] (0) -- (2);
    \draw [->-=0.5] (0) -- (3);
    \draw [->-=0.5] (0) -- (4);
    \draw [->-=0.5] (2) -- (3);
    \draw [->-=0.5] (3) -- (4);

    \draw[densely dashed, ->] (-2.3,-1.5,1.7) node[above=-2pt] {$(1)$} to[bend right] (-1,-0.8,0.8);
  \end{scope}
\end{tikzpicture}
\ \leftarrow
\begin{tikzpicture}[3d view={-40}{20}, scale=1, rotate=0, font=\scriptsize]
  \coordinate (0) at (-1/2,-{sqrt(3)/2},0); 
  \coordinate (1) at (0,0,-1.2);
  \coordinate (2) at (-1/2,{sqrt(3)/2},0);
  \coordinate (3) at (0,0,1.2);
  \coordinate (4) at (1,0,0);
  
  \draw [->-=0.5] (2) -- node[pos=-0.07] {$2$} (4) node[pos=1.07] {$4$};
  \fill[black!10, opacity=0.6] (1) -- (2) -- (3) -- (4) -- cycle;
  \draw [->-=0.5] (0) -- (1) node[below=-1pt] {$1$};
  \draw [->-=0.5] (0) -- (2);
  \draw [->-=0.5] (0) node[xshift=-4.5pt, yshift=-3pt] {$0$} -- (3) node[above=-1pt] {$3$};
  \draw [->-=0.5] (0) -- (4);
  \draw [->-=0.5] (1) -- (2);
  \draw [->-=0.5] (1) -- (4);
  \draw [->-=0.5] (2) -- (3);
  \draw [->-=0.5] (3) -- (4);
\end{tikzpicture}
\rightarrow\!
\begin{tikzpicture}[3d view={-40}{20}, scale=1, rotate=0, font=\scriptsize]
  \begin{scope}[yshift=10pt]
    \draw[densely dashed, ->] (-2.3,-1.8,2.2) node[left=-2pt, yshift=1pt] {$(0)$} to[bend left] (-1.1,-1,1);

    \coordinate (0) at (-1/2,-{sqrt(3)/2},0); 
    \coordinate (1) at (0,0,-1.2);
    \coordinate (2) at (-1/2,{sqrt(3)/2},0);
    \coordinate (3) at (0,0,1.2);
    \coordinate (4) at (1,0,0);

    \draw [->-=0.5] (2) -- (4);
    \fill[black!10, opacity=0.6] (1) -- (2) -- (3) -- (4) -- cycle;
    \draw [->-=0.5] (1) -- (2);
    \draw [->-=0.5] (1) -- (4);
    \draw [->-=0.5] (1) -- (3);
    \draw [->-=0.5] (2) -- (3);
    \draw [->-=0.5] (3) -- (4);
  \end{scope}

  \begin{scope}[xshift=15pt]
    \coordinate (0) at (-1/2,-{sqrt(3)/2},0); 
    \coordinate (1) at (0,0,-1.2);
    \coordinate (2) at (-1/2,{sqrt(3)/2},0);
    \coordinate (3) at (0,0,1.2);
    \coordinate (4) at (1,0,0);

    \fill[black!10, opacity=0.6] (1) -- (3) -- (4) -- cycle;

    \draw [->-=0.5] (0) -- (1);
    \draw [->-=0.5] (0) -- (3);
    \draw [->-=0.5] (0) -- (4);
    \draw [->-=0.5] (1) -- (4);
    \draw [->-=0.5] (1) -- (3);
    \draw [->-=0.5] (3) -- (4);

    \draw[densely dashed, ->] (-0.5,-1.8,1.4) node[right=-2pt] {$(2)$} to[bend right] (-0.3,-1,0.6);
\end{scope}

  \begin{scope}[xshift=-15pt]
    \coordinate (0) at (-1/2,-{sqrt(3)/2},0); 
    \coordinate (1) at (0,0,-1.2);
    \coordinate (2) at (-1/2,{sqrt(3)/2},0);
    \coordinate (3) at (0,0,1.2);
    \coordinate (4) at (1,0,0);

    \draw [->-=0.5] (1) -- (3);
    \fill[black!10, opacity=0.6] (1) -- (2) -- (3) -- (0) -- cycle;
    \draw [->-=0.5] (0) -- (1);
    \draw [->-=0.5] (0) -- (2);
    \draw [->-=0.5] (0) -- (3);
    \draw [->-=0.5] (1) -- (2);
    \draw [->-=0.5] (2) -- (3);

    \draw[densely dashed, ->] (-2,-1.5,-0.1) node[left=-2pt] {$(4)$} to[bend right] (-1,-1,0);
  \end{scope}
\end{tikzpicture}
\end{equation*}

\caption{A shaped $2$--$3$ move for positive tetrahedra. The vertices
  of the bipyramid in the middle are labeled by the number of incoming
  edges, including an invisible edge from vertex~1 to vertex~3.  In
  the shaped pentagon identity~\eqref{eq:P}, the tetrahedron on the
  left or right that does not contain vertex~$v$ is labeled $(v)$.}
  \label{fig:2-3}
\end{figure}

\subsection{Tetrahedral symmetry}

If the tetrahedral weights are required to depend on the orientation
of the tetrahedron but not on the vertex ordering itself, then they
must be invariant under the $\Z_2$ rotation symmetry
\begin{equation}
  \label{eq:Z2}
  \tet{\alpha}{x_1}{x_2}{x_3}{x'_1}{x'_2}{x'_3}
  =
 \tet{\alpha}{x_1}{x'_2}{x'_3}{x'_1}{x_2}{x_3}
\end{equation}
and the $\Z_3$ rotation symmetry
\begin{equation}
  \label{eq:Z3}
  \tet{(\alpha_1,\alpha_2,\alpha_3)}{x_1}{x_2}{x_3}{x'_1}{x'_2}{x'_3}
  =
  \tet{(\alpha_2,\alpha_3,\alpha_1)}{x_2}{x_3}{x_1}{x'_2}{x'_3}{x'_1}
  \,.
\end{equation}
These rotations together generate the chiral tetrahedral symmetry
$A_4$, which acts on the tetrahedron by even permutations of the four
vertices.  Odd permutations in the full tetrahedral group $S_4$ flip
the orientation of the tetrahedron.

Although it is natural to expect the tetrahedral weights to possess
the chiral tetrahedral symmetry, this is not always the case.
Teichm\"uller TQFT is an example in which the chiral tetrahedral
symmetry is absent at the level of tetrahedral weights, while the
partition functions on closed shaped pseudo $3$-manifolds still enjoy
this symmetry.

\subsection{Transpose of the tetrahedral weight}

Regarding $T$ as an operator on $\V(S^3)$, we define its transpose
${}^tT$ by
\begin{equation}
  \ttet{\alpha}{x_1}{x_2}{x_3}{x'_1}{x'_2}{x'_3}
  :=
  \tet{\alpha}{x'_1}{x'_2}{x'_3}{x_1}{x_2}{x_3} \,.
\end{equation}
It turns out that, for the construction of solutions of the
tetrahedron equation, the transpose tetrahedral weight ${}^tT$ must
also satisfy the shaped pentagon identity~\eqref{eq:P}.

\subsection{Examples}

\subsubsection{Meromorphic 3D index}

This is a state integral model describing Chern--Simons theory with
gauge group $\mathrm{SL}(2,\C)$ and at level $k = 0$.  It was
constructed by Garoufalidis and Kashaev \cite{MR3882990}, building on
pioneering work by Dimofte, Gaiotto and Gukov \cite{Dimofte:2011ju,
  Dimofte:2011py} and subsequent developments \cite{MR3522084,
  MR3416111}.

In this model, the state variables take values on the unit circle:
\begin{equation}
  S = \{z \in \C \mid |z| = 1\} \,,
  \qquad
  d\mu(z) = \frac{dz}{2\pi iz} \,.
\end{equation}
The tetrahedral weight depends on a complex parameter $q$ with
$|q| < 1$ and is given by
\begin{equation}
  \tet{\alpha}{x_1}{x_2}{x_3}{x'_1}{x'_2}{x'_3}
  =
  \frac{(q;q)_\infty^2}{(q^2;q^2)_\infty}
  \prod_{v=1}^3
  G_q\biggl((-q)^{\alpha_v/\pi} \frac{x_{v-1} x'_{v-1}}{x_{v+1} x'_{v+1}}\biggr)
  \,,
\end{equation}
where $(z;q)_\infty := \prod_{i=0}^\infty (1 - q^iz)$ is the
$q$-Pochhammer symbol,
\begin{equation}
  G_q(z) := \frac{(-qz^{-1};q)_\infty}{(z;q)_\infty} \,,
\end{equation}
and the indices on the state variables are understood modulo $3$.  The
chiral tetrahedral symmetry~\eqref{eq:Z2}, \eqref{eq:Z3} is manifest,
and the pentagon identity \eqref{eq:P} is satisfied with
$C_{\alpha^{(0)}, \alpha^{(2)}, \alpha^{(4)}} = 1$.

\subsubsection{Kashaev--Luo--Vartanov model}

Another example, which has a structure analogous to the 3D index
model, is the state integral model constructed by Kashaev, Luo and
Vartanov \cite{MR3486430}.

The model depends on a parameter $b \in \C$ with $\Re b > 0$.  Let
\begin{equation}
  \Phi_b(z)
  :=
  \exp\biggl(
  \int_{\R + i0}
  \frac{e^{-2ixz}}{4\sinh(xb) \sinh(xb^{-1})}
  \frac{dx}{x}
  \biggr)
\end{equation}
be Faddeev's quantum dilogarithm and
\begin{equation}
  \Psi_b(x) := e^{-\pi ix^2/2} \frac{\Phi_b(x)}{\Phi_b(0)}
\end{equation}
be the normalized version satisfying $\Psi_b(x) \Psi_b(-x) = 1$.

For this model, the state variables are real-valued,
\begin{equation}
  S = \R \,,
  \qquad
  d\mu(x) = dx \,.
\end{equation}
As in the case of the 3D index model, the tetrahedral weight satisfies
the pentagon identity \eqref{eq:P} with
$C_{\alpha^{(0)}, \alpha^{(2)}, \alpha^{(4)}} = 1$ and has the chiral
tetrahedral symmetry \cite{Kashaev-RIMS}:
\begin{equation}
  \tet{\alpha}{x_1}{x_2}{x_3}{x'_1}{x'_2}{x'_3}
  =
  \prod_{v=1}^3
  \Psi_b\biggl(
  x_{v+1} + x'_{v+1} - x_{v-1} - x'_{v-1}
  + i(b + b^{-1})\Bigl(\frac12 - \frac{\alpha_v}{\pi}\Bigr)
  \biggr)
  \,.
\end{equation}

\subsubsection{Teichm\"uller TQFT}

This state integral model was proposed by Andersen and Kashaev
\cite{MR3227503, AK2} building on earlier works by Hikami
\cite{Hikami:2001en, MR2330673}, Dimofte \cite{MR3250765}, Dimofte,
Gukov, Lenells and Zagier \cite{MR2551896}, Dijkgraaf, Fuji and Manabe
\cite{Dijkgraaf:2010ur}, among others.  It captures
$\mathrm{SL}(2,\C)$ Chern--Simons theory at level $k = 1$.  The
partition function of the Kashaev--Luo--Vartanov model is conjectured
to be equal to two times the absolute value squared of the partition
function of Teichm\"uller TQFT.

In the edge formulation of Teichm\"uller TQFT \cite{AK2}, the state
variables take values in the unit interval:
\begin{equation}
  S = [0,1] \,,
  \qquad
  d\mu(x) = dx \,.
\end{equation}
The tetrahedral weight is given by
\begin{equation}
  \tet{\alpha}{x_1}{x_2}{x_3}{x'_1}{x'_2}{x'_3}
  =
  g_{\alpha_1/2\pi, \alpha_3/2\pi}
  (x_2 + x'_2 - x_3 - x'_3, x_2 + x'_2 - x_1 - x'_1) \,,
\end{equation}
where $g_{a,c}$ is a certain function constructed from $\Phi_b$, and
satisfies the pentagon identity \eqref{eq:P} with
\begin{equation}
  C_{\alpha^{(0)}, \alpha^{(2)}, \alpha^{(4)}}
  = e^{-\pi i/12} e^{-\pi i(\alpha^{(0)}_3 + \alpha^{(2)}_1 + \alpha^{(4)}_3)/6}
  \,.
\end{equation}
It has the $\Z_2$ rotation symmetry~\eqref{eq:Z2}, but is not quite
invariant under the $\Z_3$ rotations \cite[Eq.~(33)]{AK2}:
\begin{equation}
  \tet{(\alpha_1,\alpha_2,\alpha_3)}{x_1}{x_2}{x_3}{x'_1}{x'_2}{x'_3}
  =
  e^{\pi i/12}
  e^{\pi i(x_2 + x'_2 - x_3 - x'_3)/2}
  \tet{(\alpha_2,\alpha_3,\alpha_1)}{x_2}{x_3}{x_1+\frac12}{x'_2}{x'_3}{x'_1}
  \,.
\end{equation}

\section{Solutions of the tetrahedron equation from tetrahedral
  weights}
\label{sec:TE-TW}

Let $T\colon A \times S^6 \to \C$ be the tetrahedral weight of a state
integral model.  For some set $\CC$, choose a function
$\alpha\colon \CC^3 \to \R^3$ and let $\CD \subset \CC^6$ be a domain
such that for every
$(\rho_{12}, \rho_{13}, \rho_{14}, \rho_{23}, \rho_{24}, \rho_{34})
\in \CD$,
\begin{equation}
  \label{eq:alphas}
  \begin{alignedat}{3}
    \alpha^{(1)} &= \alpha(\rho_{12}, \rho_{13}, \rho_{23}) \,,
    & \qquad &
    \alpha^{(2)} &= \alpha(\rho_{12}, \rho_{14}, \rho_{24}) \,,
    \\
    \alpha^{(3)} &= \alpha(\rho_{13}, \rho_{14}, \rho_{34}) \,,
    & \qquad &
    \alpha^{(4)} &= \alpha(\rho_{23}, \rho_{24}, \rho_{34})
  \end{alignedat}
\end{equation}
belong to $A$ and fulfil the compatibility
condition~\eqref{eq:P-angles} together with some $\alpha^{(0)} \in A$.
The chiral tetrahedral symmetry of $T$ is not assumed.

\begin{proposition}
  \label{prop:TE}

  Suppose that the shaped pentagon identity~\eqref{eq:P} holds for
  both $T$ and its transpose ${}^tT$ with the same coefficient
  $C_{\alpha^{(0)}, \alpha^{(2)}, \alpha^{(4)}}$.  Then the Boltzmann
  weight
  \begin{equation}
    \label{eq:Wrhorhorho}
    \sBol{\rho_{ij}, \rho_{ik}, \rho_{jk}}{a}{b}{c}{d}{e}{f}{g}{h}
    =
    \tet{\alpha(\rho_{ij}, \rho_{ik}, \rho_{jk})}{f}{a}{h}{b}{e}{d}
  \end{equation}
  solves the six-parameter tetrahedron equation \eqref{eq:TE6} on
  $\CD \times S^{14}$, provided that the order of integration can be
  interchanged in the passage from \eqref{eq:integral-1} to
  \eqref{eq:integral-2} below.
\end{proposition}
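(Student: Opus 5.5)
The plan is to substitute \eqref{eq:Wrhorhorho} into both sides of \eqref{eq:TE6} and reduce the equation to two applications of the shaped pentagon identity \eqref{eq:P}: one for $T$ and one for ${}^tT$. The equal coefficients $C_{\alpha^{(0)},\alpha^{(2)},\alpha^{(4)}}$ will then cancel. Since $\sBol{}{a}{b}{c}{d}{e}{f}{g}{h}=\tet{}{f}{a}{h}{b}{e}{d}$ ignores $c$ and $g$, each factor is a single tetrahedral weight. With the angles labelled as in \eqref{eq:alphas}, the left-hand side becomes
\begin{equation*}
\int_S d\mu(d)\,
\tet{\alpha^{(1)}}{b_1}{a_4}{b_3}{c_2}{d}{c_1}
\tet{\alpha^{(2)}}{c_4}{c_1}{d}{b_2}{b_4}{a_3}
\tet{\alpha^{(3)}}{a_2}{b_1}{b_3}{d}{c_5}{c_4}
\tet{\alpha^{(4)}}{c_5}{d}{c_2}{b_2}{a_1}{b_4} .
\end{equation*}
The right-hand side becomes
\begin{equation*}
\int_S d\mu(d)\,
\tet{\alpha^{(4)}}{a_2}{b_1}{a_4}{c_1}{d}{c_4}
\tet{\alpha^{(3)}}{d}{c_1}{c_2}{b_2}{a_1}{a_3}
\tet{\alpha^{(2)}}{a_2}{a_4}{b_3}{c_2}{c_5}{d}
\tet{\alpha^{(1)}}{c_4}{d}{c_5}{a_1}{b_4}{a_3} .
\end{equation*}

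First step. The $\alpha^{(1)}$ and $\alpha^{(3)}$ factors on the left match exactly the left-hand side of \eqref{eq:P} for $T$ under the identification
\[
x_{01}=a_2,\; x_{02}=b_1,\; x_{03}=a_4,\; x_{04}=b_3,\; x_{12}=c_4,\; x_{14}=c_5,\; x_{23}=c_1,\; x_{24}=d,\; x_{34}=c_2 .
\]
The compatibility condition \eqref{eq:P-angles} holds on $\CD$ by assumption. Replacing these two factors by $C\int d\mu(x_{13})$ of the three-term product produces a double integral over $d$ and $x_{13}$; this is \eqref{eq:integral-1}. Two of the three new factors are
\[
\tet{\alpha^{(2)}}{a_2}{a_4}{b_3}{c_2}{c_5}{x_{13}}
\quad\text{and}\quad
\tet{\alpha^{(4)}}{a_2}{b_1}{a_4}{c_1}{x_{13}}{c_4}.
\]
These are precisely the first and third right-hand factors with $d$ renamed $x_{13}$, and they do not involve $d$.

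Second step. Interchange the order of integration; this is the assumed passage to \eqref{eq:integral-2}. Pull out the two matched factors, leaving an inner integral over $d$ of three weights:
\begin{equation*}
\int_S d\mu(d)\,
\tet{\alpha^{(0)}}{c_4}{x_{13}}{c_5}{c_2}{d}{c_1}
\tet{\alpha^{(2)}}{c_4}{c_1}{d}{b_2}{b_4}{a_3}
\tet{\alpha^{(4)}}{c_5}{d}{c_2}{b_2}{a_1}{b_4} .
\end{equation*}
It remains to show this equals $C^{-1}$ times
\[
\tet{\alpha^{(3)}}{x_{13}}{c_1}{c_2}{b_2}{a_1}{a_3}\,
\tet{\alpha^{(1)}}{c_4}{x_{13}}{c_5}{a_1}{b_4}{a_3}.
\]
Rewrite every factor via ${}^tT$, so that top and bottom rows swap. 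The integrated variable $d$ then sits in the slot of $x_{13}$ in \eqref{eq:P}, and the remaining variables should fit the pattern of the pentagon identity for ${}^tT$ with the same angles $\alpha^{(0)},\ldots,\alpha^{(4)}$. Using that identity, read from right to left, collapses the inner integral to $C^{-1}$ times the two required factors. Multiplying the $C$ from the first step by this $C^{-1}$ gives $1$, which yields the right-hand side of \eqref{eq:TE6}.

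The main obstacle is the bookkeeping in the second step. I must verify that a single assignment of the ten edge labels $x_{ij}$ makes all five ${}^tT$ weights match simultaneously, with the slots in the correct positions and the same angle labels. If the direct assignment fails, I would try the other natural option: apply ${}^tT$'s pentagon first to a pair on the right-hand side and meet in the middle.
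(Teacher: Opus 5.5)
Your proposal is correct and matches the paper's argument: the pentagon identity for $T$ on the $\alpha^{(1)},\alpha^{(3)}$ pair gives \eqref{eq:integral-1}, the assumed interchange gives \eqref{eq:integral-2}, and the pentagon identity for ${}^tT$, read from right to left, collapses the inner $d$-integral. The bookkeeping you flagged does close. Call the first-step integration variable $e$ rather than $x_{13}$, since $x_{13}$ clashes with the second application. Then the single assignment $x_{01}=b_2$, $x_{02}=a_1$, $x_{03}=b_4$, $x_{04}=a_3$, $x_{12}=c_2$, $x_{13}=d$, $x_{14}=c_1$, $x_{23}=c_5$, $x_{24}=e$, $x_{34}=c_4$ matches all five ${}^tT$ factors, and keeping $C$ attached to the inner integral avoids dividing by it.
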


\begin{proof}
  Using the shaped pentagon identity \eqref{eq:P} for $T$ and ${}^tT$,
  the left-hand side of~\eqref{eq:TE6} can be turned into the
  right-hand side as follows:
  \begin{align}
    &
    \int_S d\mu(d)
    \tet{\alpha^{(1)}}{b_1}{a_4}{b_3}{c_2}{d}{c_1}
    \tet{\alpha^{(2)}}{c_4}{c_1}{d}{b_2}{b_4}{a_3}
    \tet{\alpha^{(3)}}{a_2}{b_1}{b_3}{d}{c_5}{c_4}
    \tet{\alpha^{(4)}}{c_5}{d}{c_2}{b_2}{a_1}{b_4}
    \\
    \label{eq:integral-1}
    &
    =
    \int_S d\mu(d)
    \tet{\alpha^{(2)}}{c_4}{c_1}{d}{b_2}{b_4}{a_3}
    \tet{\alpha^{(4)}}{c_5}{d}{c_2}{b_2}{a_1}{b_4}
    \\ \nonumber
    & \qquad \times
    C_{\alpha^{(0)}, \alpha^{(2)}, \alpha^{(4)}}
    \int_S d\mu(e)
    \tet{\alpha^{(0)}}{c_4}{e}{c_5}{c_2}{d}{c_1}
    \tet{\alpha^{(2)}}{a_2}{a_4}{b_3}{c_2}{c_5}{e}
    \tet{\alpha^{(4)}}{a_2}{b_1}{a_4}{c_1}{e}{c_4}
    \\
    &
    \label{eq:integral-2}
    =
    \int_S d\mu(e)
    \tet{\alpha^{(2)}}{a_2}{a_4}{b_3}{c_2}{c_5}{e}
    \tet{\alpha^{(4)}}{a_2}{b_1}{a_4}{c_1}{e}{c_4}
    \\ \nonumber
    & \qquad \times
    C_{\alpha^{(0)}, \alpha^{(2)}, \alpha^{(4)}}
    \int_S d\mu(d)
    \ttet{\alpha^{(0)}}{c_2}{d}{c_1}{c_4}{e}{c_5}
    \ttet{\alpha^{(2)}}{b_2}{b_4}{a_3}{c_4}{c_1}{d}
    \ttet{\alpha^{(4)}}{b_2}{a_1}{b_4}{c_5}{d}{c_2}
    \\
    &=
    \int_S d\mu(e)
    \tet{\alpha^{(2)}}{a_2}{a_4}{b_3}{c_2}{c_5}{e}
    \tet{\alpha^{(4)}}{a_2}{b_1}{a_4}{c_1}{e}{c_4}
    \ttet{\alpha^{(1)}}{a_1}{b_4}{a_3}{c_4}{e}{c_5}
    \ttet{\alpha^{(3)}}{b_2}{a_1}{a_3}{e}{c_1}{c_2}
    \\
    &=
    \int_S d\mu(d)
    \tet{\alpha^{(4)}}{a_2}{b_1}{a_4}{c_1}{d}{c_4}
    \tet{\alpha^{(3)}}{d}{c_1}{c_2}{b_2}{a_1}{a_3}
    \tet{\alpha^{(2)}}{a_2}{a_4}{b_3}{c_2}{c_5}{d}
    \tet{\alpha^{(1)}}{c_4}{d}{c_5}{a_1}{b_4}{a_3}
    \,.
  \end{align}
\end{proof}

All examples of state integral models in section~\ref{sec:SIM} satisfy
the assumptions of Proposition~\ref{prop:TE}.  The same shaped
pentagon identity holds for $T$ and ${}^tT$ because the tetrahedral
weights in these models are symmetric under transpose, as is manifest
from the explicit formulas for $T$.  The interchange of integrations
is justified as follows.  In the positive angle domain, the
tetrahedral weights are regular on the relevant integration domains.
For the meromorphic 3D index model and Teichm\"uller TQFT, the
compactness of the integration domains then implies that the integrand
is bounded.  For the Kashaev--Luo--Vartanov model, the standard
asymptotic estimates for $\Psi_b$ show exponential decay of the
integrand as $\|(d,e)\| \to \infty$.  In all cases the integrand is an
$L^1$-integrable function of the variables $(d,e)$, and Fubini's
theorem applies.

\begin{remark}
  The physical significance of the condition that both $T$ and ${}^tT$
  satisfy the shaped pentagon identity is unclear.  In fact, this
  condition seems rather restrictive.  For example, the F-symbols of a
  multiplicity-free fusion category define an angle-independent
  solution of the shaped pentagon identity by
  \begin{equation}
    \label{eq:tet-F}
    \tet{\alpha}{x_1}{x_2}{x_3}{x'_1}{x'_2}{x'_3}
    =
    [F^{x'_1 x'_3 x_1}_{x_3}]_{x'_2x_2}
    \,,
  \end{equation}
  where state variables are isomorphism classes of simple objects;
  however, its transpose fails to solve the shaped pentagon identity
  if the category has a nontrivial simple object.  Indeed, let $a$ be
  such an object and $b$ be a simple summand of $a \otimes a$ not
  isomorphic to $a$.  (Such a $b$ exists, for otherwise
  multiplicity-freeness would force $a \otimes a \cong a$, which would
  imply that $a$ is invertible and
  $a \cong a \otimes (a \otimes a^*) \cong (a \otimes a) \otimes a^*
  \cong \mathbf{1}$.)  With the choice
  \begin{equation}
      x_{01} = x_{34} = \mathbf{1} \,, \quad
      x_{02} = x_{04} = x_{24} = a \,, \quad
      x_{03} = x_{12} = x_{14} = x_{23} = b  \,,
  \end{equation}
  the left-hand side of the shaped pentagon identity for ${}^tT$ is
  nonzero, whereas the right-hand side vanishes.
\end{remark}

Let us give an example of the function $\alpha$.  Suppose that
$\CC \subset \R$ and $\alpha$ is affine.  In this case the
compatibility condition~\eqref{eq:P-angles} fixes the form of
$\alpha$, up to an overall rescaling of the spectral parameters:
\begin{equation}
  \label{eq:alpha}
  \alpha(\rho_{ij}, \rho_{ik}, \rho_{jk})
  = (\rho_{jk} - \rho_{ik}, \pi + \rho_{ij} - \rho_{jk} , \rho_{ik} - \rho_{ij})
  \,.
\end{equation}
The requirement that the dihedral angles of the tetrahedra appearing
in the tetrahedron equation, $\alpha^{(1)}$, $\alpha^{(2)}$,
$\alpha^{(3)}$, $\alpha^{(4)}$, all belong to $A$ is equivalent to the
inequalities
\begin{gather}
  \label{eq:rho-ineq-I}
  \rho_{12} < \rho_{13} < \min(\rho_{14}, \rho_{23}) \leq
  \max(\rho_{14}, \rho_{23}) < \rho_{24} < \rho_{34} \,,
  \\
  \label{eq:rho-ineq-II}
  \rho_{24} < \pi + \rho_{12} \,,
  \qquad
  \rho_{34} < \pi + \rho_{13} \,.
\end{gather}
The dihedral angles of the tetrahedron produced by the shaped $2$--$3$
move in the proof are
\begin{equation}
  \alpha^{(0)} = (\rho_{14} + \rho_{23} - \rho_{13} - \rho_{24}, \pi +
  \rho_{12} + \rho_{34} - \rho_{14} - \rho_{23}, \rho_{13} + \rho_{24} -
  \rho_{12} - \rho_{34}) \,,
\end{equation}
and requiring $\alpha^{(0)} \in A$ further imposes
\begin{equation}
  \label{eq:rho-ineq-III}
  \rho_{12} + \rho_{34} < \rho_{13} + \rho_{24} < \rho_{14} +
  \rho_{23}
  \,.
\end{equation}
The domain $\CD$ is therefore given by \eqref{eq:rho-ineq-I},
\eqref{eq:rho-ineq-II}, \eqref{eq:rho-ineq-III}.

If we wish to obtain a four-parameter solution of the tetrahedron
equation, we can set, for example,
\begin{equation}
  \rho_{ij} = r_i + r_j \,.
\end{equation}
With this specialization, the dependence of the Boltzmann weight on
the spectral parameters is only through their differences:
\begin{equation}
  \label{eq:Wrrr}
  \Bol{r_i,r_j,r_k}{a}{b}{c}{d}{e}{f}{g}{h}
  =
  \tet{(r_j - r_i, \pi + r_i - r_k, r_k - r_j)}{f}{a}{h}{b}{e}{d}
  \,.
\end{equation}
For this choice of $\rho_{ij}$, the inequalities
\eqref{eq:rho-ineq-I}, \eqref{eq:rho-ineq-II} combine into
\begin{equation}
  r_1 < r_2 < r_3 < r_4 < \pi + r_1 \,.
\end{equation}

The remaining condition \eqref{eq:rho-ineq-III}, however, collapses to
a trivial equality and $\alpha^{(0)}$ becomes $(0,\pi,0)$,
invalidating the assumption in Proposition~\ref{prop:TE} that
$\alpha^{(0)}$ belongs to $A$.  We can remedy this problem by first
perturbing $\rho_{12}$, $\rho_{13}$ to
\begin{equation}
  \rho_{12} = r_1 + r_2 - \epsilon \,,
  \qquad
  \rho_{13} = r_1 + r_3 - \delta
\end{equation}
with $\epsilon > \delta > 0$.  Then
$\alpha^{(0)} = (\delta, \pi - \epsilon, \epsilon - \delta)$ and
$\alpha^{(1)}$, $\alpha^{(2)}$, $\alpha^{(3)}$, $\alpha^{(4)}$ satisfy
the assumptions of Proposition~\ref{prop:TE} as long as
\begin{equation}
  r_1 < r_2 < r_3 < r_4 < \pi + r_1 - \epsilon \,.
\end{equation}
Since the four tetrahedra appearing in the tetrahedron equation remain
nondegenerate even at $\delta = \epsilon = 0$, the six-parameter
tetrahedron equation for the local Boltzmann
weight~\eqref{eq:Wrhorhorho} reduces to the tetrahedron equation
for~\eqref{eq:Wrrr}, under the assumptions that the tetrahedral weight
is continuous in the dihedral angles and that the limit $\epsilon$,
$\delta \to 0$ may be interchanged with the integration.  For the
examples in section~\ref{sec:SIM}, the former assumption follows from
regularity in the positive angle domain, and the latter is justified
by the dominated convergence theorem.

\begin{remark}
  In some state integral models, the partition function is invariant
  under shape gauge transformations%
  \footnote{On a single tetrahedron with dihedral angles
    $\alpha = (\alpha_1, \alpha_2, \alpha_3)$, the shape gauge
    transformation with parameter $\theta$ performed on the edge $0v$
    or on the opposite edge shifts the dihedral angle $\alpha_w$ by
    $\pm\sum_{x=1}^3 \varepsilon_{vwx} \theta$, where $\varepsilon$ is
    the completely antisymmetric tensor with $\varepsilon_{123} = 1$
    and the sign is correlated with the orientation of the
    tetrahedron.  On general shaped pseudo $3$-manifolds, shape gauge
    transformations act by shifting dihedral angles in each
    tetrahedron in the triangulation.}
  performed on internal edges.  In Teichm\"uller TQFT, for example,
  performing a shape gauge transformation on an edge is equivalent to
  shifting the state variable on that edge, which is an integration
  variable if the edge is internal.  A subgroup of the shape gauge
  symmetry descends to a gauge symmetry of the IRC models constructed
  from state integral models, but in general the spectral parameters
  cannot be removed by shape gauge transformations.  Indeed, the
  dihedral angles $\alpha^{(l,m,n)}$ of the tetrahedral weight
  corresponding to the cube~\eqref{eq:W} change under a shape gauge
  transformation by
  \begin{equation}
    \delta\alpha^{(l,m,n)}
    =
    \sum_{\sigma=\pm 1} \sum_{\eta = \pm\frac12}
    \sigma
    (
    \theta_{l-\eta,m-\eta,n+\sigma\eta},
    \theta_{l-\eta,m+\sigma\eta,n+\eta},
    \theta_{l+\sigma\eta,m+\eta,n+\eta}
    ) \,,
  \end{equation}
  where $\theta_{x,y,z}$ is the parameter assigned to the edge of the
  tetrahedron carrying the state variable $a_{x,y,z}$; therefore,
  $\sum_n \alpha^{(l,m,n)}_1$, $\sum_m \alpha^{(l,m,n)}_2$,
  $\sum_l \alpha^{(l,m,n)}_3$ are invariant on the periodic cubic
  lattice.  For the IRC model defined by~\eqref{eq:Wrrr},
  $\alpha^{(l,m,n)} = (t_m - s_l, \pi + s_l - u_n, u_n - t_m)$ and it
  follows that the difference of any pair of spectral parameters is
  gauge invariant.  In other words, shape gauge transformations only
  act by overall shifts of spectral parameters.
\end{remark}

\section{Conclusions}
\label{sec:conclusions}

In this work we constructed integrable IRC models from state integral
models.  Given a state integral model on shaped pseudo $3$-manifolds
with edge state variables, its tetrahedral weight provides a solution
of the tetrahedron equation if the tetrahedral weight and its
transpose both satisfy the shaped pentagon identity.  We have
discussed three examples of state integral models for which this
construction can be carried out.  To the best of the author's
knowledge, the resulting solutions have not appeared previously.

A distinctive feature of this construction is that the IRC model
associated with a state integral model is not defined by applying that
state integral model to a shaped triangulation of the cubic lattice.
In particular, in the IRC model state variables are placed at the
vertices of cubes, while in the original state integral model they are
placed on the edges of tetrahedra.  The local Boltzmann weight
constructed from the tetrahedral weight depends on only six of the
eight state variables at the corners of a cube.  This fact does not
make the model decoupled; any two state variables on the cubic lattice
can be joined by a chain of local Boltzmann weights since every state
variable appears in the local Boltzmann weights for some of the cubes
containing it.  In this sense, the construction produces intrinsically
three-dimensional integrable lattice models, albeit with generally
complex Boltzmann weights.

This work was motivated by the cluster algebra
approach~\cite{Sun:2022mpy}.  In that approach, solutions of the
tetrahedron equation arise from cluster transformations applied to
certain quivers.  The simplest solution corresponds to a single
mutation, which may be thought of geometrically as attaching a
tetrahedron to a triangulated surface.  Formulas in section~5
of~\cite{Sun:2022mpy} show that the Fock--Goncharov representation of
this solution agrees with the tetrahedral weight of the
Kashaev--Luo--Vartanov model in the flat limit.  The construction
developed here replaces this flat tetrahedral weight by a tetrahedral
weight with positive dihedral angles and extends the same idea to
other state integral models.

This observation naturally suggests the search for analogous
constructions corresponding to the other solutions of the tetrahedron
equation obtained in the cluster algebra approach.  Since several
important solutions involving quantum dilogarithms appear as
specializations of these solutions \cite{Inoue:2023vtx, Inoue:2023rer,
  Inoue:2024swb}, establishing such connections to state integral
models and understanding their physical implications would be
especially valuable.

\newcommand{\etalchar}[1]{$^{#1}$}
\providecommand{\bysame}{\leavevmode\hbox to3em{\hrulefill}\thinspace}
\providecommand{\MR}{\relax\ifhmode\unskip\space\fi MR }
\providecommand{\MRhref}[2]{%
  \href{http://www.ams.org/mathscinet-getitem?mr=#1}{#2}
}
\providecommand{\href}[2]{#2}

\end{document}